\DeclareSymbolFont{matha}{OML}{txmi}{m}{it}
\DeclareMathSymbol{\varv}{\mathord}{matha}{118}
\begin{document}
	\title{Near-Field Beamforming for Stacked Intelligent Metasurfaces-assisted MIMO Networks } 
	\author{Anastasios Papazafeiropoulos, Pandelis Kourtessis, Symeon Chatzinotas, Dimitra I. Kaklamani, 			Iakovos S. Venieris \thanks{A. Papazafeiropoulos is with the Communications and Intelligent Systems Research Group, University of Hertfordshire, Hatfield AL10 9AB, U. K., and with SnT at the University of Luxembourg, Luxembourg.  P. Kourtessis is with the Communications and Intelligent Systems Research Group, University of Hertfordshire, Hatfield AL10 9AB, U. K.  S. Chatzinotas is with the SnT at the University of Luxembourg, Luxembourg. Dimitra I. Kaklamani is with the Microwave and Fiber Optics Laboratory, and Iakovos S. Venieris is  with the Intelligent Communications and Broadband Networks Laboratory, School of Electrical and Computer Engineering, National Technical University of Athens, Zografou, 15780 Athens,	Greece.	
							A. Papazafeiropoulos was supported  by the University of Hertfordshire's 5-year Vice Chancellor's Research Fellowship. S. Chatzinotas   was supported by the National Research Fund, Luxembourg, under the project RISOTTI.
			Corresponding author's email: tapapazaf@gmail.com.}}
	\maketitle\vspace{-1.7cm}
	\begin{abstract}	
Stacked intelligent metasurfaces (SIMs) \textcolor{black}{have} recently gained significant interest since they enable precoding in the wave domain that comes with increased processing capability and reduced energy consumption. The study of SIMs and high frequency propagation make the study of the performance in the near field of crucial importance. Hence, in this work, we focus on SIM-assisted multiuser multiple-input multiple-output (MIMO) systems operating in the near field region. To this end, we formulate the weighted sum rate maximisation problem in terms of the transmit power and the phase shifts of the SIM. By applying a block coordinate descent (BCD)-relied algorithm, numerical results show the enhanced performance of the SIM in the near field with respect to the far field.
	\end{abstract}
	\begin{keywords}
		Reconfigurable intelligent surface 	(RIS), stacked intelligent metasurfaces (SIM),  near-field communications,  6G networks.
	\end{keywords}
	\section{Introduction}
Reconfigurable intelligent surface (RIS) has recently emerged as a fundamental technology that increases network capacity while accounting for energy sustainability  \cite{DiRenzo2020}. Generally, an RIS consists of an artificial surface with a large number of nearly passive elements that can shape the propagation environment \cite{Huang2019,Papazafeiropoulos2023}.

Most existing works on RIS have relied on the assumption of a single-layer surface, which limits the degrees of freedom concerning the adjustment of the beam patterns  \cite{Huang2019,Papazafeiropoulos2023, Guo2020a}. Moreover, it has been shown that conventional RISs do not have the capability of inter-user interference suppression  \cite{Guo2020a}.

These observations motivated the proposition of stacked intelligent metasurface (SIM), which includes a stack of an array of intelligent surfaces similar to the structure of artificial neural networks  \cite{An2023b}. Note that a SIM is not a mathematical abstraction. Among its remarkable properties, we draw attention to its processing capability, where the forward propagation eventuates at the speed of light. Specifically, a SIM-based transceiver of point-to-point multiple-input multiple-output (MIMO) communication systems has been proposed in  \cite{An2023b}, where the combining and the precoding are implemented as the electromagnetic (EM) waves propagate through the corresponding SIMs. In \cite{An2023c} and \cite{Papazafeiropoulos2024c}, a SIM is integrated at the base station (BS) to enable beamforming in the wave domain based on instantaneous and statistical CSI, respectively. In \cite{Papazafeiropoulos2024a} and \cite{Papazafeiropoulos2024}, contrary to  \cite{An2023b} and \cite{An2023c}, we have proposed more general hybrid wave-digital architectures and more efficient algorithms that enable the simultaneous optimization of all parameters.

In parallel, as we move on to higher frequencies and to larger services, the region of a near field may include distances of several hundred metres \cite{Bjoernson2020a}. Notably, in the near-field region, EM waves exhibit distinct propagation characteristics compared to the far field, e.g., from planar-wave
propagation we result in spherical-wave propagation \cite{Mu2024}. The spherical wave propagation in the near-field
introduces a new distance dimension, which facilitates interference mitigation and increases the performance. Although several works have studied the performance of RIS in the near-field region \cite{Bjoernson2020a,Mu2024}, no work has studied studied the performance of SIM in this region, which motivates this work.

In this work, we consider a SIM-assisted multiuser MIMO operating the near field.\footnote{\textcolor{black}{As mentioned before, contrary to an RIS, a SIM has an enhanced capability of adjusting the beam patterns,  can suppress the multi-user interference, the  forward propagation eventuates at the speed of light, and the precoding takes place as the EM waves propagate through 		it. The latter two mean that there is no need for digital beamforming with the accompanied radio frequency (RF) chains, which significantly reduces the hardware cost and energy consumption, while the  precoding delay is reduced since the processing performed in the		wave domain. } } In other words, contrary to previous works \cite{An2023b,An2023c,Papazafeiropoulos2024a,Papazafeiropoulos2024,Papazafeiropoulos2024c}, we focus on the performance in the near field and we assume multi-antenna users. We develop the weighted sum rate maximisation problem to  optimize both the transmit power and the SIM by applying a block coordinate descent (BCD)-based algorithm. Among the results, we observe that the use of a SIM in the near field results in increased weighted sum rate compared to far-field beamforming.

\textcolor{black}{	\textit{Notation}: Matrices  and  vectors are represented by boldface upper  and lower case symbols, respectively. The notations $(\cdot)^\T$, $(\cdot)^\H$, and $\mathrm{tr}\!\left( {\cdot} \right)$ denote the transpose, Hermitian transpose, and trace operators, respectively. Also, the symbol  $\EE\left[\cdot\right]$ denotes  the expectation operator. The notation  $\diag\left(\bA\right) $ represents a vector with elements equal to the  diagonal elements of $ \bA $. The notation  $\bb \sim \cC\cN{(\b0,\mathbf{\Sigma})}$ represents a circularly symmetric complex Gaussian vector with zero mean and a  covariance matrix $\mathbf{\Sigma}$.}

	\section{System and Channel Models}\label{System}
In this section, we present the system and channel models for the SIM-assisted near-field multi-user MIMO communication system.
\subsection{System Model}
\begin{figure}
	\begin{center}
		\includegraphics[width=0.8\linewidth]{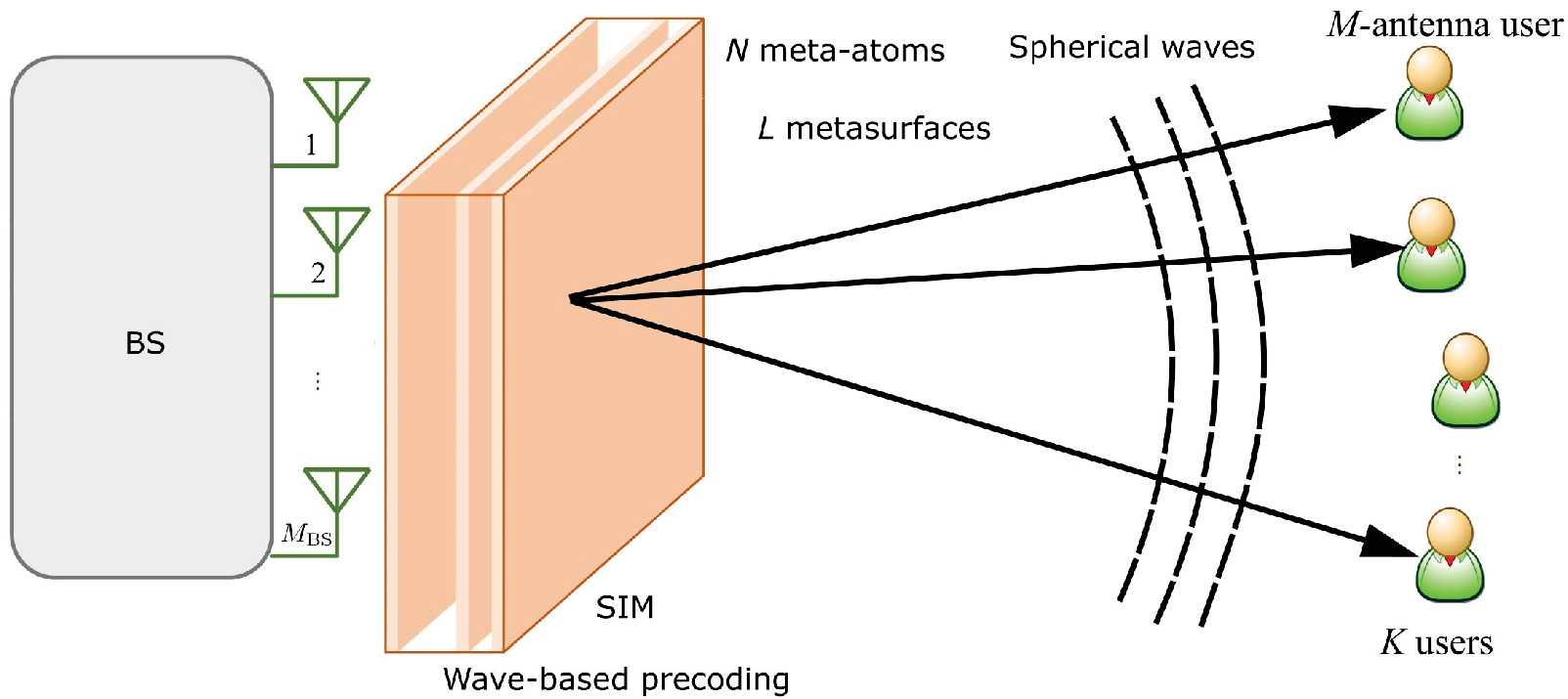}
		\caption{A SIM-aided MIMO system in the near field. }
		\label{Fig1}
	\end{center}
\end{figure}
	We consider the downlink of a multiuser MIMO wireless system with a BS  equipped with  a  uniform linear array (ULA) of $ M_{\mathrm{BS}} $ antennas that serve $ K $ multi-antenna users, each having $M$ antennas. A SIM is incorporated into the BS to provide precoding in the EM wave domain, as shown in Fig. \ref{Fig1}. The  SIM  is implemented by an array of $ L $  metasurfaces that each one of them is made by a uniform planar array (UPA) of $ N=N_{y}\times N_{z} $ meta-atoms with $N_{y}$ and $N_{z}$ denoting the number of meta-atoms in the directions of $y$ and $z$ axes. We denote the sets  $ \mathcal{N}=\{1,\ldots,N\} $, $ \mathcal{L}=\{1,\ldots,L\} $, $\mathcal{M}=\{1,\ldots,M\} $, and $\mathcal{K}=\{1,\ldots,K\} $. All metasurfaces consist of an identical number of  meta-atoms. Also,  the SIM   is  connected to a smart controller to optimize the   phase shifts of the EM waves  propagated through each of its meta-atoms.  We denote $ \bPhi^{l}=\diag(\bphi_{l})\in \mathbb{C}^{N \times N} $, where $ \bphi_{l} =[\phi^{l}_{1}, \dots, \phi^{l}_{N}]^{\T}\in \mathbb{C}^{N \times 1}$ is the  phases-shift matrix of the SIM. In particular, we have  $ \phi_{n}^{l} =e^{j \theta_{n}^{l}}$, where  $ \theta_{n}^{l}\in [0,2\pi), n \in \mathcal{N}, l \in \mathcal{L} $,  the phase shift by the $ n $-th meta-atom on the $ l $-th transmit metasurface layer. We assume that the phase shifts are continuously-adjustable. We have assumed that the modulus of the elements of the coefficient matrix  equals $ 1 $ to assess the  maximum  achievable rate \cite{Wu2019}. 
 
 \subsection{Channel Model}
  The  coefficient of transmission from the $ \tilde{n} $th meta-atom on the $ (l-1) $\textcolor{black}{th}  layer to the $ n $th meta-atom on the $ l $th  layer of the SIM is provided by the Rayleigh-Sommerfeld diffraction theory \cite{Lin2018} as  
 \begin{align}
 	w_{n,\tilde{n}}^{l}=\frac{A_{t}\cos x_{n,\tilde{n}}^{l}}{r_{n,\tilde{n}}^{l}}\left(\frac{1}{2\pi r^{l}_{n,\tilde{n}}}-j\frac{1}{\lambda}\right)e^{j 2 \pi r_{n,\tilde{n}}^{l}/\lambda}, l \in \mathcal{L},\label{deviationTransmitter}
 \end{align}
 where $ A_{t} $ denotes the area of each meta-atom,   $ r_{n,\tilde{n}}^{l} $, is the  transmission distance,and  $ x_{n,\tilde{n}}^{l} $ is the angle between the propagation direction and the normal direction of the $ (l-1) $th  metasurface layer,. Overall,  the SIM can be modeled as
 \begin{align}
 	\bG=\bPhi^{L}\bW^{L} \cdots \bPhi^{2}	\bW^{2} \bPhi^{1} \in \mathbb{C}^{N \times N },\label{TransmitterSIM}
 \end{align}
 where $ \bW^{l}\in \mathbb{C}^{N \times N}, l \in \mathcal{L}/\{1\} $ is the  coefficient matrix of transmission between the $ (l-1)st $  layer and the $ l $th  layer. \textcolor{black}{In the case of $ \bW^{1}_{k} \in \mathbb{C}^{N \times M_{\mathrm{BS}}} $, it is the transmission coefficient matrix between the $ M_{\mathrm{BS}} $ BS antennas  and the first  layer of the SIM.}
 
 Regarding the description of the near-field LoS channel \textcolor{black}{and without any loss of generality}, 
we assume that the SIM is implemented in the $YZ$-plane while all the users are located in the $XY$-plane. The reference element of the SIM and the reference antenna of user $k$ are given by $(0,y_{s}, z_{s})$ and $(x_{k}, y_{k}, 0)$, respectively. The  \textcolor{black}{coordinates} of the element $n$  of the outer surface of the SIM are given by
 \begin{align}
 	\bs_{n}=[0, c_{y}(n)d_{S}+y_{s}, c_{z}(n)d_{S}+z_{f}]^{T},
 \end{align}
 where $c_{y}(n)=\mod(n-1, N_{y})$ and $c_{z}(n)=\lfloor n-1/N_{y}\rfloor$ and  $d_{S} = \lambda/2$ is the SIM element spacing. In the case of the ULAs of users, we assume that they are parallel to the $y$-axis. Hence, the  coordinates of antenna $m$ of user $k$ are given by
 \begin{align}
 	\bu_{m}^{k}=[x_{k}, (m-1)d_{m}+y_{z},0]^{T},
 \end{align}
 where $d_{m} = \lambda/2$ is the user antenna spacing. 
 
 The near-field LoS channel between the $n$-th STAR element and the $m$-th antenna of user $k$, where the  EM signals  undergo free space path-loss while propagating in spherical wavefronts, 
 is given by \cite{Zhang2022}
 \begin{align}
 [\bH_{k}]_{mn}=\al_{mn}^{k}\exp(-j 2 \pi r_{mn}^{k}/\lambda),
 \end{align}
 where $ r_{mn}^{k}=\|\bu_{m}^{k}-	\bs_{n}\|_{2}$ and $ \al_{mn}^{k}=\frac{\lambda}{4\pi  r_{mn}^{k}}$ denote the distance between the $m$-th antenna of user $k$ and the $n$-th  element of the outer surface of the SIM and the free space path-loss coefficient, respectively.   Notably, $r_{mn}^{k}$ can be written as
 \begin{align}
 r_{mn}^{k}=\|[\hat{d}_{k}\cos \hat{\vartheta}_{k}, \hat{d}_{k}\sin\hat{\vartheta}_{k}+(m-1)\bar{d_{m}},0]^{T}-	\bs_{n}\|_{2},
 \end{align}
 where $\hat{d}_{k}$ and $\hat{\vartheta}_{k}=\arctan(y_{k}/x_{k})$ are the distance and angle of user $k$, respectively.
 \begin{remark}
 	 The term $r_{mn}^{k}$ includes both the distance and angle information of user $k$. This property results in performance and coverage improvements in multi-user wireless systems since the SIM can transmit the signal not only at a certain angle but also at a certain distance.
 	 \end{remark}
 \begin{remark}
If we assumed parallel waves, i.e., a far-field LoS channel between the SIM and user $k$, it would be written as
\begin{align}
\bH_{k}^{\mathrm{far}}=\sqrt{\beta_{k}MN}\bee_{k}(\al_{k})\bee_{\mathrm{SIM}}^{\H}(\varphi_{k},\vartheta_{k}),\label{farfield}
\end{align}
where $\beta_{k}$ is the path-loss coefficient between the SIM and user $k$, which is the same as in the case of near field. Also, $\varphi_{k}$, $\vartheta_{k}$ denote the azimuth AoD and the elevation AoD with respect to the SIM, and $\al_{k}$  is the AoA concerning  user $k$. The AoA and AoD are defined based on the location of user $k$ with respect to the SIM. Moreover, by denoting $\textcolor{black}{\kappa}=2 \pi/\lambda$, we have
\begin{align}
\bee_{k}(\al_{k})&\!=\![1, e^{j \textcolor{black}{\kappa} d_{m}\cos\al_{k}}, \cdots, e^{j \textcolor{black}{\kappa} (M-1) d_{m}\cos\al_{k}}]^{\T}\\
\bee_{\mathrm{SIM}}(\varphi_{k},\vartheta_{k})&\!=\![1,\! e^{j \textcolor{black}{\kappa} d_{S} \sin \varphi_{k} \sin \vartheta_{k} }, \cdots\!, e^{j \textcolor{black}{\kappa} (N_{x}\!-\!1) d_{S}\sin \varphi_{k} \sin \vartheta_{k} }]^{\T}\nn\\
	&\!\otimes\! [1, e^{j \textcolor{black}{\kappa} d_{S}\cos \vartheta_{k}}, \cdots, e^{j \textcolor{black}{\kappa} (N_{y}-1) d_{S}\cos \vartheta_{k}}]^{\T}.
	\end{align}
 \end{remark}
 \begin{remark}
 	The near-field LoS channel allows an increase in DoFs by providing multiple information streams since the rank of the LoS channel in the near-field case can be approximated with the DoFs of the  spheroidal waves, which can also be enhanced with distance \cite{Miller2000}. On the contrary, the far-field LoS MIMO channel in \eqref{farfield} is just of rank one.
 \end{remark}
 \section{Downlink Data Transmission and Achievable Rate} 
 In this section, we  present the downlink transmission model, which will lead to the corresponding  achievable rate of a SIM-assisted near-field MIMO system.
 
 \subsection{Downlink Data Transmission}
 During the downlink transmission, wave-based beamforming takes place thanks to the SIM, which is in contrast to conventional digital precoding, where  each symbol is assigned to an individual beamforming vector. Under this setting, the BS selects a set of $K$ appropriate antennas from the total of $M_{\mathrm{BS}}$ antennas since  each data stream has to be transmitted directly from the corresponding antenna at the BS. For the sake of simplicity, we assume that $M_{\mathrm{BS}}=KM$ \cite{An2023b}.
 
 The received signal by user $k$ is given by
 	\begin{align}
 	\by_k&=\bH_{k}\bG\bW^{1}_{k}\bP_{k}\bx_{k}+\bH_{k}\bG\sum_{i\ne k}^{K}\bW^{1}_{i}{\bP_{i}}\bx_{i}+\bn_{k},
 \end{align}
 where $\bn_{k} \in \mathbb{C}^{M\times 1}$ is AWGN  with distribution $ \mathcal{CN}(\b0, \sigma^{2}\Id_{M})$ and  $\bx_{k} \in \mathbb{C}^{M\times 1}$  is the symbol vector with  $\EE\{\bx_{k}\bx_{k}^{\H}\}=\Id_{M}$ and $\EE\{\bx_{k}\bx_{k'}^{\H}\}=\b0$, for $k\ne k'$. Also, $\bP_{k}\in \mathbb{C}^{M\times M}$ is a diagonal matrix, where its $m$-th diagonal entry  denotes the square root of the power allocated to the $m$-th data stream of the $k$-th user. The total transmit power constraint at the BS is given by
 \begin{align}
 	\sum_{k=1}^{K}\|\bP_{k}\|_{F}^{2}\le P,
 \end{align}
 where $P$ denotes the transmit power budget at the BS.
 
 \subsection{Achievable Rate} 
 The achievable rate of $k$-th user is given by
 \begin{align}
 	R_{k}(\bP, \bPhi)=\log |\Id+\bH_{k}\bW^{1}_{k}\bP^{2}_{k}(\bW^{1}_{k})^{\H}\bH_{k}^{\H}\bQ_{k}^{-1}|,
 \end{align}
 where $ \bP=\{ \bP_{i}, \forall i \in \mathcal{K}\}$, $ \bPhi=\{ \bPhi^{l}, \forall l \in \mathcal{L}\}$, and $\bQ_{k}$ is the interference-plus-noise covariance matrix given by
 \begin{align}
 	\bQ_{k}=\sum_{i\ne k}^{K}\bH_{k}\bG\bW^{1}_{i}\bP^{2}_{i}(\bW^{1}_{i})^{\H}\bG^{\H}\bH_{k}^{\H}+\sigma^{2}\Id_{M}.
 \end{align}

	\section{Problem Formulation  and  Optimization}\label{PSConfig}
	Herein, we first formulate the optimization problem, and then we solve it by using the BCD method.
		\subsection{Problem Formulation}
	Herein, we consider the  maximization of the weighted sum rate  by optimizing the transmit power and the phase-shifts of the SIM. In particular, the optimization problem can be formulated as
		\begin{subequations}\label{eq:subeqns}
				\begin{align}
						(\mathcal{P}1)~~&	\max_{\bP,\bPhi }\sum_{i=1}^{K} \eta_{i}	R_{i}(\bP, \bPhi)\label{Maximization1} \\
					&~	\mathrm{s.t}~~~	\bG=\bPhi^{L}\bW^{L}\cdots\bPhi^{2}\bW^{2}\bPhi^{1}\bW^{1},
					\label{Maximization3} \\
										&\;\quad\;\;\;\;\;\!\!~\!		\bPhi^{l}=\diag(\phi^{l}_{1}, \dots, \phi^{l}_{N}), l \in \mathcal{L},
					\label{Maximization5} \\
									&\;\quad\;\;\;\;\;\!\!~\!		|	\phi^{l}_{n}|=1, n \in \mathcal{N}, l \in \mathcal{L},	\label{Maximization7} \\
					&	\;\quad\;\;\;\;\;\!\!~\!			\sum_{k=1}^{K}\|\bP_{k}\|_{F}^{2}\le P	\label{Maximization8},
												\end{align}
	\end{subequations}
	where the constraints \eqref{Maximization3}-\eqref{Maximization7} concern the SIM elements, while the constraint \eqref{Maximization8} corresponds to the power constraint of the BS. Note that $\eta_{k}$ denotes the access priority for user $k$.
	
	 The coupling between  the optimization variables of the transmit power and phase shifts of each surface, which also have constant modulus constraints,  and the non-convexity  of problem $(\mathcal{P}1)$,  make the solution challenging. To this end, we reformulate $(\mathcal{P}1)$ in terms of the mean-square error (MSE).
	 
	 Let $\bU_{k}\in \mathbb{C}^{M \times M}$ be the linear combining matrix for  user $k$, the estimated signal vector of user $k$ can be written as
	 \begin{align}
	\tilde{ \bx}_{k}=\bU_{k}^{\H} \br_k.
	 \end{align}
	 
	 Application of the  weighted minimum mean square error (WMMSE) technique allows to transform \eqref{Maximization1} into a more tractable expression. Thus, $(\mathcal{P}1)  $ can be written as \cite{Shi2011}
		\begin{subequations}\label{eq:subeqns1}
		\begin{align}
			(\mathcal{P}2)~~&	\min_{\bZ,\bU, \bP,\bPhi }\sum_{i=1}^{K} \eta_{i}	g_{i}(\bZ,\bU \bP,\bPhi)\label{Maximization10} \\
			&~	\mathrm{s.t}~~~	\eqref{Maximization3}-\eqref{Maximization8},
				\end{align}
	\end{subequations}
	 where the function $	g_{i}(\bZ,\bU, \bP,\bPhi)$ is given by
	 \begin{align}
	 		g_{i}(\bZ,\bU, \bP,\bPhi)=\log|\bZ_{i}|-\tr(\textcolor{black}{\bZ_{i}} \bE_{i})+M\label{Gfunction}
	 \end{align}
	 with 	 $\bU=\{\bU_{k}, \forall i \in \mathcal{K}\}$,  $\bZ=\{\bZ_{i}\succeq \b0, \forall i \in \mathcal{K}\}$, and $ \bE_{k}$ corresponding to the  set of combining matrices, the set of auxiliary matrices, and the MSE of user $k$.

	 The MSE  of  user $k$  is obtained as
	 \begin{align}
	 \bE_{k}&=\EE\{(\tilde{ \bx}_{k}-\bx_{k})(\tilde{ \bx}_{k}-\bx_{k})^{\H}\}\nn\\
	 &=(\bU_{k}^{\H}\bH_{k}\bG\bW^{1}_{k}\bP_{k}-\Id_{M})(\bU_{k}^{\H}\bH_{k}\bG\bW^{1}_{k}\bP_{k}-\Id_{M})^{\H}\nn\\
	 &+\bU_{k}^{\H}\bH_{k}\bG\bW^{1}_{\bar{k}}\bP^{2}_{\bar{k}}(\bW^{1}_{\bar{k}})^{\H}\bG^{\H}\bH_{k}^{\H}\bU_{k}+\sigma^{2}\bU_{k}^{\H}\bU_{k},\label{MSEmatrix}
	 \end{align}
	 where $\bP_{\bar{k}}=\diag(\bP_{1}, \ldots,\bP_{k-1}, \bP_{k+1}, \ldots, \bP_{K}) $ and $\bW^{1}_{\bar{k}}=[\bW^{1}_{1}, \dots,\bW^{1}_{k-1}, \bW^{1}_{k+1}, \ldots, \bW^{1}_{K}] \in \mathbb{C}^{N\times (K-1)M}$.
	 
	 The tractability of the reformulated optimization problem $	(\mathcal{P}2)$ because of the concavity of the objective function with respect to $\bZ,\bU, \bP $, $\bPhi$ allows its solution by using the BCD method in terms of four blocks, where in each iteration, one variable is optimized while keeping the other variables fixed.
	 
	 \subsubsection{Optimization with respect to $\bU$} The optimal $\bU$ is obtained from \eqref{Maximization10} by solving $\partial 	g_{i}/ \partial \bU_{i}=0$,  $\forall i \in \mathcal{K} $ while keeping $\bZ, \bP ,\bPhi$ fixed. Specifically, we obtain
	 \begin{align}
	 	\bU^{\mathrm{opt}}_{k}=(\bQ_{k}+\bH_{k}\bW^{1}_{k}\bP^{2}_{k}(\bW^{1}_{k})^{\H}\bH_{k}^{\H})^{-1}\bH_{k}\bW^{1}_{k}\bP_{k}.\label{Uopt}
	 \end{align}
	 \subsubsection{Optimization with respect to $\bZ$} In a similar way,  the optimal $\bZ$ is obtained from \eqref{Maximization10} by solving $\partial 	g_{i}/ \partial \bZ_{i}=0$,  $\forall i \in \mathcal{K} $ while keeping $\bU, \bP ,\bPhi$ fixed. Thus, we obtain
	 \begin{align}
	 	\bZ^{\mathrm{opt}}_{k}=	(\bE^{\mathrm{opt}}_{k})^{-1},\label{zopt}
	 \end{align}
	 where $\bE^{\mathrm{opt}}_{k}$ results by substituting $\bU^{\mathrm{opt}}_{k}$ into \eqref{MSEmatrix}.

	 Note that by using the Woodbury matrix identity, the objective function $	(\mathcal{P}2)$ can be rewritten as 
	 \begin{align}
	 &g_{i}(\bZ,\bU, \bP,\bPhi)=\log \det((\bE^{\mathrm{opt}}_{k})^{-1})\nn\\
	 &=\log \det(\Id_{M}+\bH_{k}\bW^{1}_{k}\bP^{2}_{k}(\bW^{1}_{k})^{\H}\bH_{k}^{\H}\bQ_{k}^{-1}),\label{Wood}
	 \end{align}
	 where we have substituted \eqref{Uopt} and \eqref{zopt}. We observe that \eqref{Wood} equals $R_{i}(\bP, \bPhi)$, which means that $	(\mathcal{P}1)$ and $	(\mathcal{P}2)$ are equivalent regarding the  solutions of $\bP$ and $\bPhi $.
	 \subsubsection{Optimization with respect to $\bP$} In this case, with $\bZ, \bU, \bPhi$  assumed fixed, the weighted MSE minimization problem   $(\mathcal{P}2)$ can be written as
	 	\begin{subequations}\label{eq:subeqns2}
	 	\begin{align}
	 		(\mathcal{P}2.1)~~&	\min_{\bP }\sum_{i=1}^{K}	g(\bP)\label{Maximization11} \\
	 		&~	\mathrm{s.t}~~~	\eqref{Maximization8},
	 	\end{align}
	 	\end{subequations}
	 	where $	g(\bP)=\sum_{i=1}^{K}(\tr(\bP_{i}^{\H}\bA \bP_{i})$ $-2\tr(\Re(\bB_{i}\bP_{i})))$, in which, we have denoted $\bB_{i}=\eta_{i}\bZ_{i}\bU_{i}^{\H}\bH_{i}\bG\bW^{1}_{i}$ and $\bA=\sum_{k=1}^{K} \eta_{k}(\bW^{1}_{k})^{\H} \bG^{\H}\bH_{k}^{\H}\bU_{k}\bZ_{k} \bU_{k}^{\H}\bH_{k}\bG\bW^{1}_{k} $. 
	 	
	 	The problem 	$(\mathcal{P}2.1)$ is a a quadratically constrained quadratic program, which can be solved by using CVX \cite{Grant2015}. Also, 	$(\mathcal{P}2.1)$ could be solved in closed form with reduced complexity by applying the Lagrangian dual decomposition method \cite{Pan2020}.
	 	
\subsubsection{Optimization with respect to $\Phi$} Given $\bZ, \bU, \bP$, after substituting \eqref{MSEmatrix} into \eqref{Gfunction}, the optimization problem with respect to the phase shifts matrices is formulated as
\begin{subequations}\label{eq:subeqns3}
	\begin{align}
		(\mathcal{P}2.2)~~&	\min_{\bphi_{l}}\sum_{i=1}^{K}	g_{i}(\bphi_{l})\label{Maximization112} \\
		&~	\mathrm{s.t}~~~	\eqref{Maximization3}-\eqref{Maximization7},
	\end{align}
\end{subequations}
where 
\begin{align}
	g_{i}(\bphi_{l})=\tr(\bG^{\H}\bC_{i} \bG \bD -2\Re(\bE_{i}\bG)).\label{optPhi}
\end{align}
	with  $\bC_{i}=\eta_{i}\bH_{k}^{\H}\bU_{k}\bZ_{k} \bU_{k}^{\H}\bH_{k}$, $\bD=\bW^{1}_{i}(\sum_{k=1}^{K}\bP_{k}^{2})(\bW^{1}_{i})^{\H}$, and  $\bE_{i}=\eta_{i}\bW^{1}_{i}\bP_{i}\bZ_{k} \bU_{k}^{\H}\bH_{k}$
	 The problem $	(\mathcal{P}2.2) $ is non-convex due to the unit-modulus constraint and the non-convex constraints regarding the phase shifts.
	 
	 Herein, we are going to  apply 	  the projected gradient 	ascent algorithm until convergence to a  locally optimal solution to $ (\mathcal{P}2.2)$. Specifically, by starting from $ \bphi_{l}^{0} $, we shift along the gradient of   $  	g_{i}(\bphi_{l})$. Next, we project the new point $ \bphi_{l} $ onto  $ \Phi_{l} $ to hold the new points in the feasible set. For ease of exposition, we have defined the set $\Phi_{l}=\{|	\phi^{l}_{n}|=1, n \in \mathcal{N}\} $. Note that $ \phi^{l}_{n} $ has to be found inside the unit circle  because of the unit-modulus constraint. Also, we denote  $ P_{\Phi_{l}}(\cdot) $ the projection onto $ \Phi_{l} $. 
	 
	 	The following iteration describes the algorithm. In particular, we have
	 \begin{align}
	 	\bphi_{l}^{i+1}&=P_{\Phi_{l}}(\bphi_{l}^{i}+\mu_{i}\nabla_{\bphi_{l}}g_{j}(\bphi_{l}^{i}))\label{p1}.
	 \end{align}
	 The step size is obtained by the Armijo-Goldstein backtracking line search method. We have $ \mu_{i} = T_{i}\bar{\kappa}^{m_{i}} $, where   $ \bar{\kappa} \in (0,1) $ and $ T_i>0 $ with $ m_{i} $ being  the
	 smallest positive integer that  satisfies
	 \begin{align}
	 	g_{j}(\bphi_{l}^{i+1})\geq	B_{T_{i}\bar{\kappa}^{m_{i}}}(\bphi_{l}^{i};\bphi_{l}^{i+1}),
	 \end{align}
	 where 
	 \begin{align}
	 	\!\!	B_{\mu}(\bphi_{l};\bx)\!=\!g_{j}(\bphi_{l})\!+\!\langle	\nabla_{\bphi_{l}}g_{j}(\bphi_{l}),\bx\!-\!\bphi_{l}\rangle\!-\!\frac{1}{\mu}\|\bx-\bphi_{l}\|^{2}_{2}
	 \end{align}
	 is the  quadratic approximation of $ f(\bphi_{l}) $.
	 
	 The following lemma provides the  complex-valued gradient.
	 \begin{lemma}\label{LemmaGradient}
	 	The complex gradient $ \nabla_{\bphi_{l}}	g_{i}(\bphi_{l})$ is given in closed-form by
	 		 		\begin{align}
	 			\nabla_{\bphi_{l}}	g_{i}(\bphi_{l})=\diag(	   \bJ_{l}^{*} \bD^{\T} \bG^{\T}  \bC_{i}^{\T}  \bR_{l}^{*} 	  ).
	 		\end{align}
	 		\end{lemma}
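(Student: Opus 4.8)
The plan is to isolate the dependence of $\bG$ on the $l$-th metasurface layer, differentiate the resulting expression by Wirtinger calculus, and then recompress the answer into matrix form.

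First, since $\bG$ depends on $\bphi_l$ only through the diagonal factor $\bPhi^l=\diag(\bphi_l)$, I would split it as $\bG=\bR_l\,\bPhi^l\,\bJ_l$, where $\bR_l=\bPhi^{L}\bW^{L}\cdots\bPhi^{l+1}\bW^{l+1}$ and $\bJ_l=\bW^{l}\bPhi^{l-1}\bW^{l-1}\cdots\bPhi^{1}$ collect the factors lying, respectively, to the left and to the right of $\bPhi^l$ in $\bG$ (with $\bR_L=\bJ_1=\Id$); neither depends on $\bphi_l$. Hence $\bG=\bR_l\diag(\bphi_l)\bJ_l$ and $\bG^{\H}=\bJ_l^{\H}\diag(\bphi_l^{*})\bR_l^{\H}$. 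Substituting these into \eqref{optPhi} and repeatedly using the cyclic invariance of the trace, I would put $g_i$ in the normal form $g_i(\bphi_l)=\tr(\diag(\bphi_l^{*})\,\bK_2\,\diag(\bphi_l)\,\bK_1)-2\Re\,\tr(\diag(\bphi_l)\,\bJ_l\bE_i\bR_l)$, with $\bK_2=\bR_l^{\H}\bC_i\bR_l$ and $\bK_1=\bJ_l\bD\bJ_l^{\H}$; the point of the rearrangement is that $\bphi_l$ now appears only through the two diagonal matrices $\diag(\bphi_l)$ and $\diag(\bphi_l^{*})$.

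Next, treating $\bphi_l$ and $\bphi_l^{*}$ as independent variables, I would compute the complex (Wirtinger) gradient $\nabla_{\bphi_l}g_i=\partial g_i/\partial\bphi_l^{*}$ --- the direction entering the ascent step \eqref{p1} --- using only the elementary rule $\partial\,\tr(\diag(\bphi_l^{*})\bM)/\partial\bphi_l^{*}=\diag(\bM)$, valid for any $\bM$ independent of $\bphi_l^{*}$. For the quadratic term this returns $\diag(\bK_2\diag(\bphi_l)\bK_1)=\diag(\bR_l^{\H}\bC_i\,\bG\,\bD\,\bJ_l^{\H})$ once $\bR_l\diag(\bphi_l)\bJ_l$ is re-collapsed into $\bG$; the linear term, whose only antiholomorphic piece is $\overline{\tr(\bE_i\bG)}$, is differentiated in the same way. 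Finally, since the diagonal of a square matrix is invariant under transposition, $\diag(\bR_l^{\H}\bC_i\bG\bD\bJ_l^{\H})=\diag((\bR_l^{\H}\bC_i\bG\bD\bJ_l^{\H})^{\T})=\diag(\bJ_l^{*}\bD^{\T}\bG^{\T}\bC_i^{\T}\bR_l^{*})$, where I used $(\bX^{\H})^{\T}=\bX^{*}$; this yields the claimed closed form $\nabla_{\bphi_l}g_i(\bphi_l)=\diag(\bJ_l^{*}\bD^{\T}\bG^{\T}\bC_i^{\T}\bR_l^{*})$.

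I expect the main obstacle to be the second step: keeping the two occurrences of $\bphi_l$ --- one entering through $\bG$ and the other through $\bG^{\H}$ --- cleanly separated under the trace before differentiating, so that no cross-term is mishandled, and then performing the transpose/conjugate bookkeeping that turns the Hermitian-transpose expression $\bR_l^{\H}\bC_i\bG\bD\bJ_l^{\H}$ into the conjugate-transpose expression appearing in the statement. The complex differentiation itself, once the normal form is in place, is routine, as is checking that $\bR_l$ and $\bJ_l$ really do absorb every factor of $\bG$ other than $\bPhi^l$.
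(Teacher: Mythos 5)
Your derivation is essentially the paper's own proof: both isolate the $l$-th layer through $\bG=\bR_{l}\bPhi^{l}\bJ_{l}$ (the paper phrases this via the differential $d(\bG)=\bR_{l}\,d(\bPhi_{l})\,\bJ_{l}$ with exactly your $\bR_{l}$ and $\bJ_{l}$), pick out the antiholomorphic dependence under the trace, apply $\partial\,\tr(\diag(\bphi_{l}^{*})\bM)/\partial\bphi_{l}^{*}=\diag(\bM)$, and finish with the same transpose-inside-$\diag$ bookkeeping that turns $\diag(\bR_{l}^{\H}\bC_{i}\bG\bD\bJ_{l}^{\H})$ into $\diag(\bJ_{l}^{*}\bD^{\T}\bG^{\T}\bC_{i}^{\T}\bR_{l}^{*})$. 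One remark: you mention the antiholomorphic piece $\overline{\tr(\bE_{i}\bG)}$ of the linear term but its contribution (of the form $-\diag(\bJ_{l}^{*}\bE_{i}^{*}\bR_{l}^{*})$) does not appear in your final expression --- the paper's proof drops it in the same way, so this is a shared feature rather than a deviation from the paper's argument.
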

	\begin{proof}
	To obtain $ \nabla_{\bphi_{l}}	g_{i}(\bphi_{l})$, we first derive the following differential
	\begin{align}
	d(	g_{j}(\bphi_{l}^{i}))&=\tr\left(\right.\!\!d(\bG^{\H})\bC_{i} \bG \bD +\bG^{\H}\bC_{i} d(\bG) \bD\nn\\
	&-2\Re(\bE_{i}d(\bG))\!\!\left.\right)\\
	&=\tr\left(\right.\!\!\bJ_{l}^{\H}d(\bPhi_{l}^{\H})	\bR_{l}^{\H}
	\bC_{i} \bG \bD +\bG^{\H}\bC_{i} \bR_{l} d(\bPhi_{l})\bJ_{l} \bD\nn\\
	&-2\Re(\bE_{i}\bR_{l} d(\bPhi_{l})\bJ_{l})\!\!\left.\right)\label{num2},
	\end{align}
	where, in  \eqref{num2}, we have substituted $d(\bG)=\bR_{l} d(\bPhi_{l})\bJ_{l}$ with $ 	\bR_{l}=\bPhi_{L}\bW^{L}\cdots\bPhi_{l+1}\bW^{l+1} $, and $\bJ_{l}= \bW^{l}\bPhi_{l-1}\bW^{l-1}\cdots \bPhi_{1} $.
	
	Having derived the differential, we obtain
	\begin{align}
		\nabla_{\bphi_{l}}g_{i}(\bphi_{l})&=\frac{\partial}{\partial\bphi_{l}^{*}}g_{i}(\bphi_{l})=\diag(	   \bJ_{l}^{*} \bD^{\T} \bG^{\T}  \bC_{i}^{\T}  \bR_{l}^{*} 	  ).
	\end{align}
\end{proof}
	 
\subsubsection{Overall Algorithm} 
	\begin{algorithm}[th]
	\caption{Overall Algorithm\label{Algoa1}}
	\begin{algorithmic}[1]
		\STATE Initialize feasible $\bP$ and $\bPhi$ that satisfy \eqref{Maximization3}-\eqref{Maximization8},
		\REPEAT
		\STATE Given $\bP$ and $\bPhi$, update the combining matrices $\bU$ using \eqref{Uopt}.
				\STATE Given $\bP$, $\bPhi$, and $\bU$ update the auxiliary matrices $\bZ$ using \eqref{zopt}.
		\STATE Given  $\bPhi$,  $\bU$, and $\bZ$ update the transmit power $\bP$ by solving \eqref{eq:subeqns2}.
				\STATE Given  $\bP$,  $\bU$, and $\bZ$ update the phase shifts  matrices $\bPhi$ by solving \eqref{eq:subeqns3}.
				\UNTIL{The increase of \eqref{eq:subeqns1} is less than a threshold $\epsilon$.}
			\end{algorithmic}
\end{algorithm} 
A summary of the  proposed algorithm for solving \eqref{eq:subeqns1} is provided in Algorithm \ref{Algoa1}.
 Given that \eqref{Maximization10} is non-decreasing in each iteration  and since it  is upper bounded because the transmit power is limited,  Algorithm \ref{Algoa1} is guaranteed to converge to a stationary point. 

The complexity of Algorithm \ref{Algoa1} is mainly yielded from Problems $	(\mathcal{P}2.1)$ and $	(\mathcal{P}2.2)$. In particular,  the complexity of  $	(\mathcal{P}2.1)$ is $O(K M_{\mathrm{BS}}^{3} )$ and the complexity  $	(\mathcal{P}2.2)$ is $O(KL(N^2+M^2))$. Hence, the overall computational complexity of Algorithm \ref{Algoa1} is provided by $O(I_{\mathrm{BCD}} K (M_{\mathrm{BS}}^{3} +L(N^2+M^2))$, where $I_{\mathrm{BCD}}$ is the number of BCD iterations.	 
	
	\section{Numerical Results}\label{Numerical}
Herein, we assess the performance of the weighted sum rate of a  SIM-assisted near-field multi-user MIMO communication system by depicting analytical results and Monte Carlo simulations. The layout assumes that the SIM is parallel to the $ y-z $ plane and centered  along the $ x-$axis. The users are located in a circular region with a radius between $2$ m and $4$ m. The spacing  between adjacent meta-atoms  is assumed to be $ \lambda/2 $  while the size of each meta-atom  is $ \lambda/2 \times \lambda /2 $. Also, we have $ d_{\mathrm{SIM}}= T_{\mathrm{SIM}}/L$, where $ T_{\mathrm{SIM}}=5 \lambda $ is the thickness of the SIM. Regarding the  distance between the $ \tilde{n}-$th meta-atom of the $ (l-1)-$st metasurface and the  $ {n}-$th meta-atom of the $ l-$st metasurface, it  is given by $ d_{n,\tilde{n}}^{l}=\sqrt{d_{\mathrm{SIM}}^{2}+d_{n,\tilde{n}}^{2}} $, where
\begin{align}
	\!\!	d_{n,\tilde{n}}\!=\!\frac{\lambda}{2}\sqrt{\lfloor | n-\tilde{n}|/N_{x}\rfloor^{2}\!+\![ \mathrm{mod}(|n-\tilde{n}|,N_{x})]^{2}}.
\end{align}
In the case of the  transmission distance between the $ m $-th antenna of the BS and the $ \tilde{n} $-th meta-atom on the first metasurface layer, it  is given by \eqref{dist1}. Moreover, we have $ \cos x_{n,\tilde{n}}^{l}= d_{\mathrm{SIM}}/ d_{n,\tilde{n}}^{l}, \forall l \in \mathcal{L}$.
\begin{figure*}
	\begin{align}
		{\small 	d_{\tilde{n},m}^{1}\!=\!\sqrt{\!d_{\mathrm{SIM}}^{2}\!+\!\Big[\!\Big(\!\mathrm{mod}(\tilde{n}\!-\!1, N_{x})\!-\!\frac{N_{x}\!-\!1}{2}\!\Big)\frac{\lambda}{2}\!-\!\Big(m\!-\!\frac{N_{t}\!+\!1}{2}\Big)\frac{\lambda}{2}\Big]^{2}\!+\!\Big(\!\lceil \tilde{n}/N_{x} \rceil\!-\!\frac{N_{y}\!+\!1}{2}\Big)^{2}\frac{\lambda_{2}}{4}}}.\label{dist1}
	\end{align}
	\hrulefill
\end{figure*}
The  path loss   is given by $	\tilde \beta_k = C_{0} (d_k/\hat{d})^{-\alpha}$, 
where $ C_{0} =(\lambda_{2}/4 \pi \hat{d})$ is  the free space path loss at the reference distance of $ \hat{d}=1~ \mathrm{m}$, and $\alpha=2.5$ is the path-loss exponent.   The  carrier frequency and the system bandwidth are $ 10~\mathrm{GHz} $ and $ 20~\mathrm{MHz} $, respectively.  Furthermore, we assume $  M_{\mathrm{BS}} =4 $, $ K=4 $, $M=2$, $ N=40 $, and  $ L=4 $. The Rayleigh distance of a SIM with $N=40$ elements at the above frequency of $5$ GHz is approximately $5$ m, which means that all users are inside the region of the near field.  

To show better the improvement in the near field, we assume two settings: i) a random user setup, where users lie in different angles, and ii) an inline user setup, where users lie in the same line, i.e., they are located in the same angle with respect to the SIM.

	\begin{figure}%
	\centering
	\includegraphics[width=0.9\linewidth]{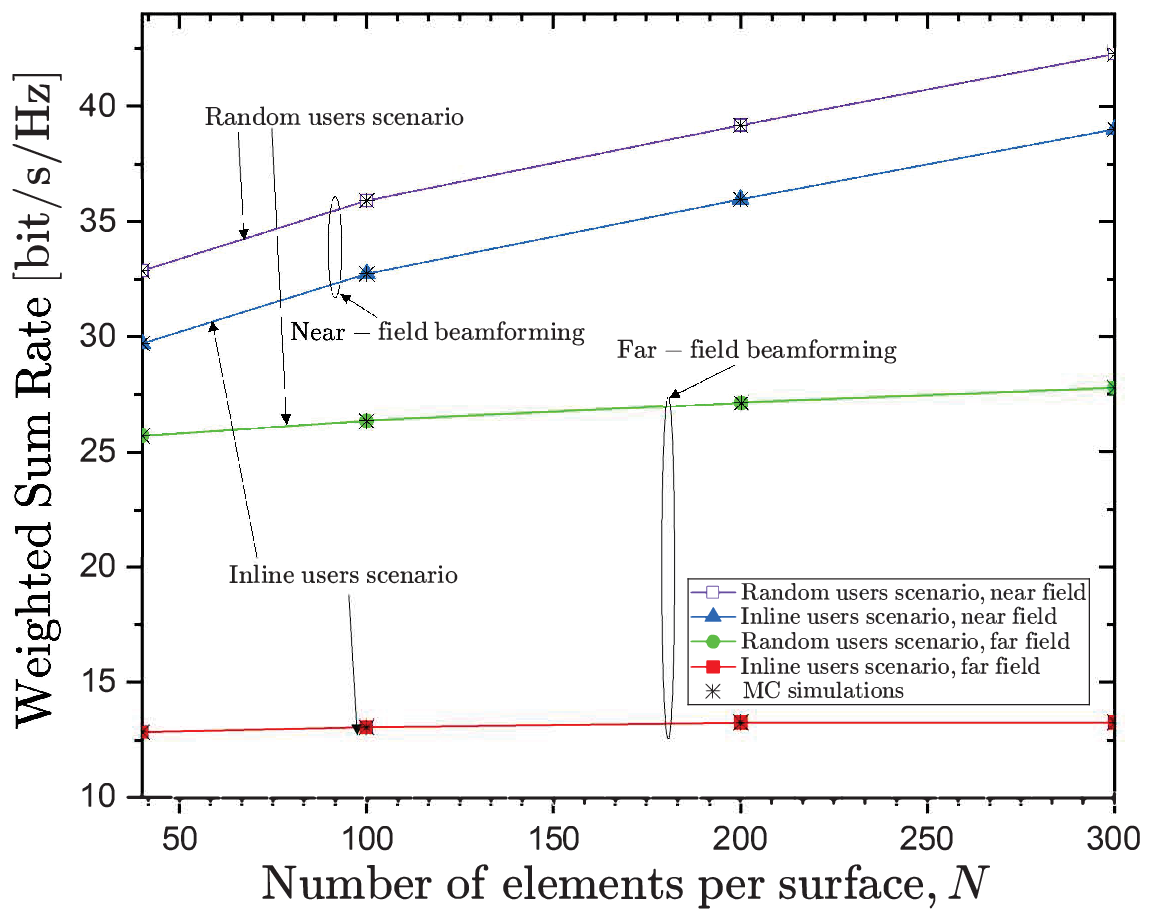}
	\caption{Weighted sum rate of the  SIM-aided MIMO architecture with respect to the number of meta-atoms  $ N $.}
	\label{fig2}
\end{figure}
In Fig. \ref{fig2}, we depict the achievable weighted sum rate  versus the nu6mber of   elements $ N $ of each surface while varying the number of surfaces $L$. For the sake of comparison, we have also depicted the far-field scenario obtained based on \eqref{farfield}. In this case, the angles of the channels  are obtained based on the Uniform distribution. We observe that the rate increases with $N$ because a larger $N$ results in a higher beamforming gain but in the case of the far-field, especially in the inline user setting, the  increase is insignificant because both the gain and the interference become higher with $N$. Moreover, the near-field beamforming achieves a higher rate than the far-field one because the near-field LoS channels transmit more data streams to the users since they have a  higher rank. In other words, a degrees of freedom enhancement appears.  Also, under the same type of beamforming, i.e., in the near-field or far-field beamforming, the random user scenario appears better performance compared to the inline user scenario because the latter comes with higher inter-user interference. The reason is that in the inline user scenario, only the user distance information is leveraged to mitigate the inter-user interference, while in the random user scenario both the distance and angles information are leveraged. Notably, Monte Carlo (MC) simulations verify the analytical results.

In Fig. \ref{fig3}, we show  the achievable weighed sum rate versus the transmit power while varying the number of surfaces $L$ in the scenario of random users setup. As can be seen, the weighted sum rate increases in all cases with power because a higher power budget enables the reception of stronger signals from the users. Also, the SIM outperforms significantly the case of a single surface, i.e., $ L=1$. Generally, by increasing the number of surfaces of the SIM, the rate increases because the SIM can manage to mitigate the inter-user interference in the wave domain \cite{An2023b}. Furthermore, we show the comparison between near and far-field beamforming, and we observe that the former performs better because the latter includes higher inter-user interference.

\begin{figure}%
	\centering
	\includegraphics[width=0.9\linewidth]{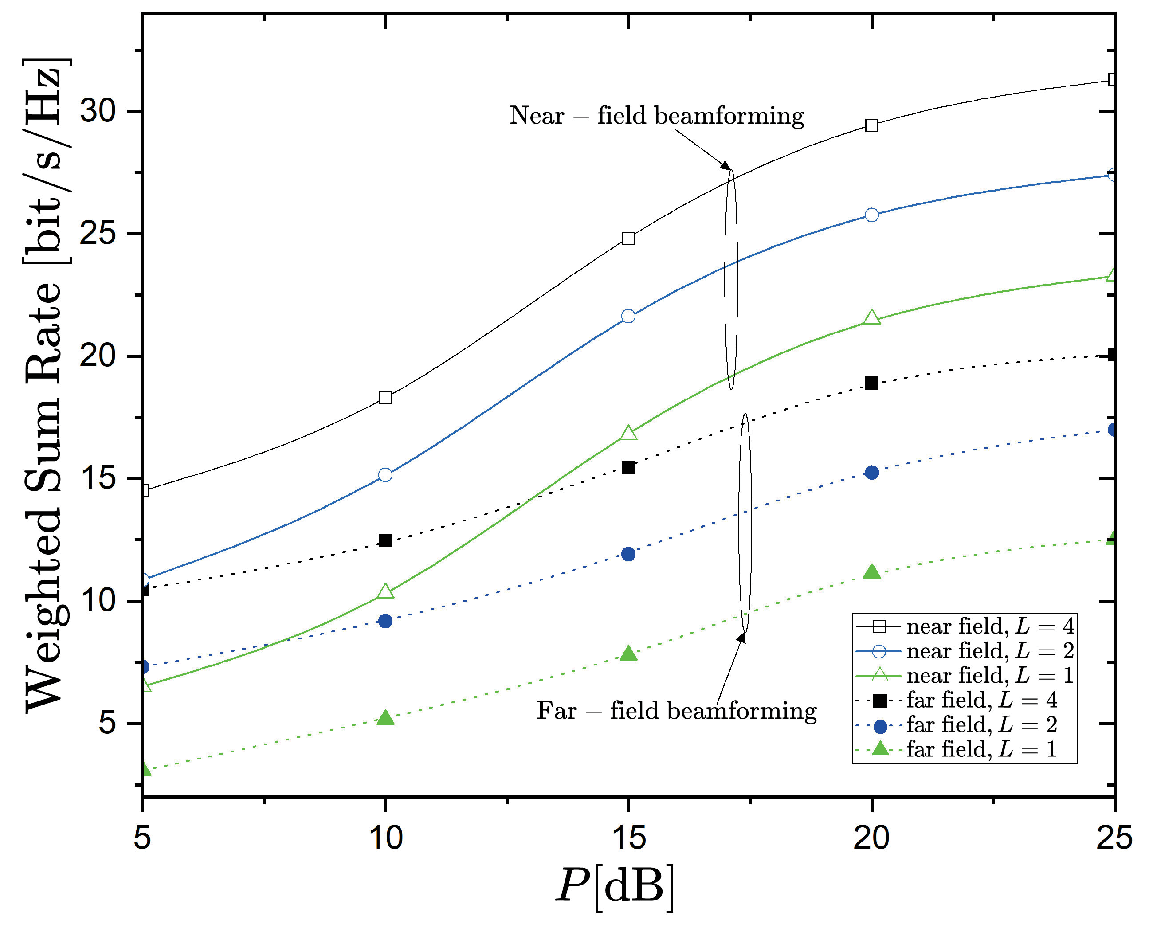}
	\caption{Weighted sum rate of the  SIM-aided MIMO architecture with respect to the transmit power  $ P $.}
	\label{fig3}
\end{figure}
 
	\section{Conclusion} \label{Conclusion} 
This paper presented the study of the weighted sum rate of SIM-assisted multiuser MIMO communication systems in the near field. A BCD-based algorithm was used to solve the non-convex optimization. Numerical results showed that near-field beamforming can improve the performance compared to far field and  the degrees of freedom are facilitated by the near-field channels.
	 
	\bibliographystyle{IEEEtran}
	
	\bibliography{IEEEabrv,bibl}

\begin{thebibliography}{10}
\providecommand{\url}[1]{#1}
\csname url@samestyle\endcsname
\providecommand{\newblock}{\relax}
\providecommand{\bibinfo}[2]{#2}
\providecommand{\BIBentrySTDinterwordspacing}{\spaceskip=0pt\relax}
\providecommand{\BIBentryALTinterwordstretchfactor}{4}
\providecommand{\BIBentryALTinterwordspacing}{\spaceskip=\fontdimen2\font plus
\BIBentryALTinterwordstretchfactor\fontdimen3\font minus
  \fontdimen4\font\relax}
\providecommand{\BIBforeignlanguage}[2]{{%
\expandafter\ifx\csname l@#1\endcsname\relax
\typeout{** WARNING: IEEEtran.bst: No hyphenation pattern has been}%
\typeout{** loaded for the language `#1'. Using the pattern for}%
\typeout{** the default language instead.}%
\else
\language=\csname l@#1\endcsname
\fi
#2}}
\providecommand{\BIBdecl}{\relax}
\BIBdecl

\bibitem{DiRenzo2020}
M.~Di~Renzo \emph{et~al.}, ``Smart radio environments empowered by
  reconfigurable intelligent surfaces: {How} it works, state of research, and
  the road ahead,'' \emph{IEEE J. Sel. Areas Commun.}, vol.~38, no.~11, pp.
  2450--2525, 2020.

\bibitem{Huang2019}
C.~Huang \emph{et~al.}, ``Reconfigurable intelligent surfaces for energy
  efficiency in wireless communication,'' \emph{IEEE Trans. Wireless Commun.},
  vol.~18, no.~8, pp. 4157--4170, 2019.

\bibitem{Papazafeiropoulos2023}
A.~Papazafeiropoulos \emph{et~al.}, ``Achievable rate of a {STAR-RIS} assisted
  massive {MIMO} system under spatially-correlated channels,'' \emph{IEEE
  Trans. Wireless Commun.}, vol.~23, no.~2, pp. 1550--1564, 2024.

\bibitem{Guo2020a}
H.~Guo \emph{et~al.}, ``Weighted sum-rate maximization for reconfigurable
  intelligent surface aided wireless networks,'' \emph{IEEE Trans. Wireless
  Commun.}, vol.~19, no.~5, pp. 3064--3076.

\bibitem{An2023b}
J.~An \emph{et~al.}, ``Stacked intelligent metasurfaces for multiuser downlink
  beamforming in the wave domain,'' \emph{arXiv preprint arXiv:2309.02687},
  2023.

\bibitem{An2023c}
------, ``Stacked intelligent metasurfaces for efficient holographic {MIMO}
  communications in {6G},'' \emph{IEEE J. Sel. Areas Commun.}, 2023.

\bibitem{Papazafeiropoulos2024c}
A.~Papazafeiropoulos \emph{et~al.}, ``Achievable rate optimization for large
  stacked intelligent metasurfaces based on statistical {CSI},'' \emph{accepred
  with Minor Changes in IEEE Wireless Commun. Let.}, pp. 1--1, 2024.

\bibitem{Papazafeiropoulos2024a}
------, ``Achievable rate optimization for stacked intelligent
  metasurface-assisted holographic {MIMO} communications,'' \emph{IEEE Trans.
  Wireless Commun.}, pp. 1--1, 2024.

\bibitem{Papazafeiropoulos2024}
A.~Papazafeiropoulos, P.~Kourtessis, and S.~Chatzinotas, ``Performance of
  double-stacked intelligent metasurface-assisted multiuser massive {MIMO}
  communications in the wave domain,'' \emph{arXiv preprint arXiv:2402.16405},
  2024.

\bibitem{Bjoernson2020a}
E.~Bj{\"o}rnson and L.~Sanguinetti, ``Power scaling laws and near-field
  behaviors of massive {MIMO} and intelligent reflecting surfaces,'' \emph{IEEE
  O. J. Commun. Soc.}, vol.~1, pp. 1306--1324, 2020.

\bibitem{Mu2024}
X.~Mu \emph{et~al.}, ``Reconfigurable intelligent surface-aided near-field
  communications for {6G: Opportunities} and challenges,'' \emph{IEEE Veh.
  Tech. Mag.}, vol.~19, no.~1, pp. 65--74, 2024.

\bibitem{Wu2019}
Q.~Wu and R.~Zhang, ``Intelligent reflecting surface enhanced wireless network
  via joint active and passive beamforming,'' \emph{IEEE Trans. Wireless
  Commun.}, vol.~18, no.~11, pp. 5394--5409, 2019.

\bibitem{Lin2018}
X.~Lin \emph{et~al.}, ``All-optical machine learning using diffractive deep
  neural networks,'' \emph{Science}, vol. 361, no. 6406, pp. 1004--1008, 2018.

\bibitem{Zhang2022}
H.~Zhang \emph{et~al.}, ``Beam focusing for near-field multiuser {MIMO}
  communications,'' \emph{IEEE Trans. Wireless Commun.}, vol.~21, no.~9, pp.
  7476--7490, 2022.

\bibitem{Miller2000}
D.~A. Miller, ``Communicating with waves between volumes: Evaluating orthogonal
  spatial channels and limits on coupling strengths,'' \emph{Applied optics},
  vol.~39, no.~11, pp. 1681--1699, 2000.

\bibitem{Shi2011}
Q.~Shi \emph{et~al.}, ``An iteratively weighted {MMSE} approach to distributed
  sum-utility maximization for a {MIMO} interfering broadcast channel,''
  \emph{IEEE Trans. Signal Process.}, vol.~59, no.~9, pp. 4331--4340, 2011.

\bibitem{Grant2015}
M.~Grant, S.~Boyd, and Y.~Ye, ``{CVX}: {M}atlab software for disciplined convex
  programming (2008),'' \emph{Web page and software available at
  https://web.stanford.edu/~boyd/cvxbook}, 2015.

\bibitem{Pan2020}
C.~{Pan} \emph{et~al.}, ``Multicell {MIMO} communications relying on
  intelligent reflecting surfaces,'' \emph{IEEE Trans. Wireless Commun.},
  vol.~19, no.~8, pp. 5218--5233, 2020.

\end{thebibliography}
	\end{document}